\documentclass[12pt,a4paper]{article}
\usepackage[a4paper, left=2.5cm, right=2.5cm, top=2.5cm, bottom=3.5cm,footskip=1.5cm]{geometry}
\usepackage[utf8]{inputenc}
\usepackage{amsmath}
\usepackage{graphicx}
\usepackage{amssymb}
\usepackage{comment}
\usepackage{booktabs}
\usepackage{amsthm}
\usepackage[colorlinks=true, allcolors=blue]{hyperref}
\usepackage[T1]{fontenc}

\newtheorem{lemma}{Lemma}
\newtheorem{theorem}{Theorem}
\newtheorem{conjecture}{Conjecture}

\def\thefootnote{\fnsymbol{footnote}}

\begin{document}
\begin{titlepage}
\thispagestyle{empty}

\vskip5cm             

\begin{center}
{\large {\bf Study of Meta-Fibonacci Integer Sequences by \\
\medskip
Continuous Self-Referential Functional Equations}} 
\end{center}

\vskip0.5cm
\begin{center}
{\large Klaus Pinn}\\
\vskip5mm
{Köhlerstr.~3 }\\
{D--12205 Berlin, Germany \\[5mm]
 e--mail: klaus.pinn@gmail.com
 }
\end{center}
\vskip0.5cm
\begin{abstract}
\par\noindent
I propose and investigate the use of continuous functional equations for the study of meta-Fibonacci integer sequences. This exploratory study includes three sequences 
with quite different behavior: Conway's famous sequence $A(n)= A(A(n-1))+A(n-A(n-1))$, the sequence $D(n)= D(D(n-1))+D(n-1-D(n-2))$ introduced by the present author more than 25 years ago, and Hofstadter's well-known $Q(n)= Q(n-Q(n-1))+Q(n-Q(n-2))$. The sequences are studied in their equivalent detrended forms $(a,d,q)(n)=2\,(A,D,Q)(n)-n$. For $a(n)$ and $d(n)$, a highly symmetric functional equation admits exact continuous solutions that nicely model the global behavior (backbone) of the sequences. For the Hofstadter sequence, a continuous functional model is developed that leads to a random matrix approach for the generation and study of fractal solutions. Two remarkable properties of the Q-sequence are reproduced by the model: the anomalous scaling of the generation length, which scales $\sim (2-\eta)^k$, and the anomalous amplitude growth that scales like $2^{\alpha k}$.
\end{abstract}
\end{titlepage}

\setcounter{footnote}{0} \def\thefootnote{\arabic{footnote}}
\section{Introduction}
In this paper, we continue the study of the  meta-Fibonacci sequence
\begin{align}
&D(1)=D(2)=1 \, , \notag \\ 
&D(n)= D(D(n-1)) + D(n-1-D(n-2)) \; {\rm for} \; n\geq3 \, , 
\label{eq:def_D}
\end{align}
that was initiated more than 25 years ago \cite{mypaper}. The sequence is listed in the Online Encyclopedia of Integer Sequences as A055748 \cite{OEIS}.
As a reference case for comparative study we shall also employ Conway's sequence \cite{Mallows,Kubo}, 
\begin{align}
&A(1)=A(2)=1 \, , \notag \\ 
&A(n)= A(A(n-1)) + A(n-A(n-1)) \; \text{for} \; n\geq3 \, .
\label{eq:def_A}
\end{align}
The third sequence that plays a prominent role in this article is the well-known Hofstadter sequence, 
\begin{align}
&Q(1)=Q(2)=1 \, , \notag \\ 
&Q(n)= Q(n-Q(n-1)) + Q(n-Q(n-2)) \; \text{for} \; n\geq3 \, , 
\label{eq:def_Q}
\end{align}
that was introduced in \cite{GEB}.  

$D(n)$ can be looked at as a cousin of Conway's sequence.  Whereas $A(n)$ exhibits completely regular and predictable behavior, $D(n)$ shows chaotic regimes separated by highly regular regions around $n=2^k$. The Hofstadter sequence is certainly the wildest of the three sequences. Figure~\ref{figADQ} shows the first 1024 values of $A(n)$, $D(n)$, and $Q(n)$.

\begin{figure}
\centering
\includegraphics[width=13cm]{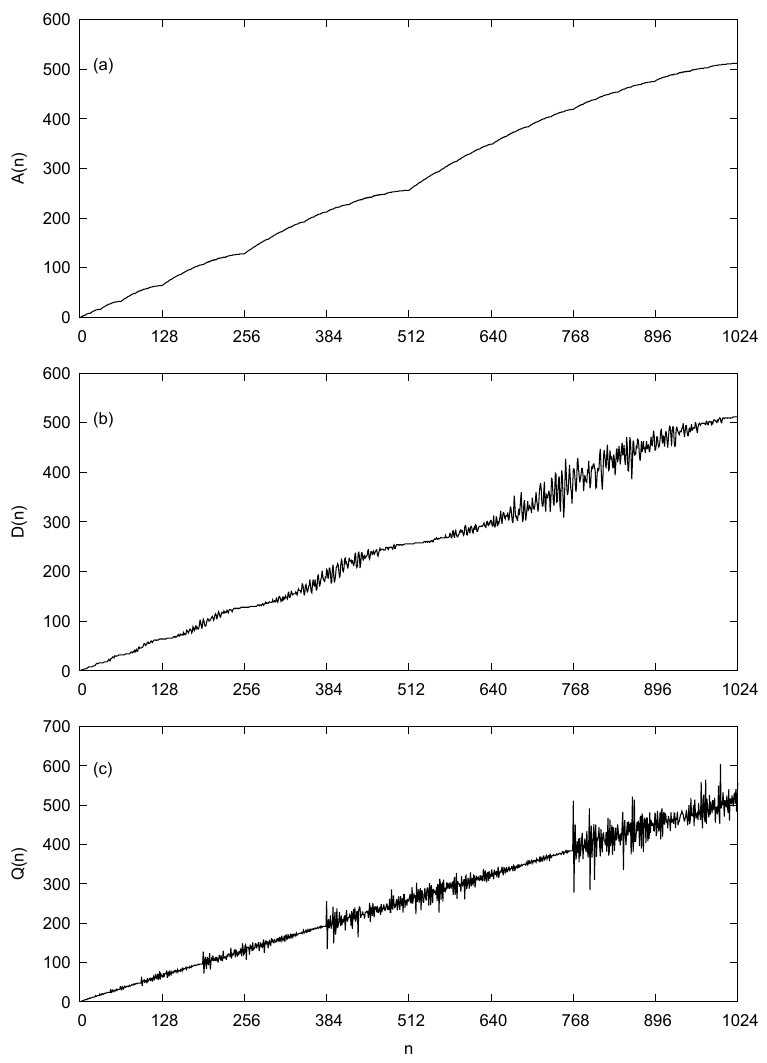}
\parbox[t]{0.85\textwidth}
 { \caption{\label{figADQ}
$A(n)$, $D(n)$, and $Q(n)$ for $n \leq 1024$. 
 } }
\end{figure}

A lot of work has been done to better understand the meta-Fibonacci type recurrences and to provide proofs for relevant properties of certain sequences. So far, rigorous results exist only for the ``tame'' or ``slow'' sequences, but not for the more chaotic types. See, e.g., \cite{Guy} for an introduction of the problem and an overview of the literature as of 2004, and also, e.g., papers by Mallows, Kubo, Conolly, Vakil, and Tanny \cite{Mallows,Kubo,Conolly,Tanny}. The focus of contemporary research on meta-Fibonacci sequences is on combinatorical investigations, e.g.\ by use of tree structures and spot-based generations \cite{dalton,Fox}, always staying very close to the detailed integer mechanics of the sequence definitions. 

In \cite{mypaper2} (about $Q(n)$), and \cite{mypaper} (about $D(n)$), the existing work of more combinatorial character was supplemented by mainly experimental results on the generation structure and statistical properties, especially the scaling behavior of the chaotic patterns. In the present article, I would like to continue this line of research by introducing a continuous dynamics approach that (in the spirit of the renormalization group) abstracts somewhat from the details and aims at characterizing the behavior on larger scales (for $A(n)$ and $D(n)$), or the origin of anomalous scaling (for $Q(n)$). 

The article is organized as follows: In Section~\ref{detrended} we derive recurrences for the detrended sequences $a$, $d$, and $q$. Their functional form helps revealing the stylized type of its backward-looking referential structure. Then two main sections \ref{CONWAY} and \ref{Hofstadter_Modeling} follow, one about the continuous functional modeling of the Conway type sequences $a$ and $d$, and a second one about modeling of the Hofstadter sequence. 

The Conway modeling Section~\ref{CONWAY} is structured as follows: In Subsection~\ref{iterated_mappings} we demonstrate that the sequences can be understood as composed of iterated functional maps. The essential properties of these mappings will then be studied via continuous equations in Subsections \ref{master_equation}, \ref{piecewise_solution}, and \ref{backbone}. In Subsection~\ref{master_equation} we define what we would like to call the Conway Model Equation (CME). Its piecewise linear solutions with a specific initial function are presented together with a complete mathematical proof in Subsection~\ref{piecewise_solution}. This solution is a smooth function that describes the slow mode movements of the $a(n)$ and $d(n)$ sequences without the many twiggles or chaotic ups and downs. In Subsection~\ref{backbone} the  backbone function is put together and discussed. 

The Hofstadter Modeling Section  \ref{Hofstadter_Modeling} consists of four subsections. In \ref{ssCONT} we propose a strongly simplified continuous recurrence model, and identify the elementary building blocks of its continuous (regular) solutions. However, the typical Hofstadter fluctuations cannot be built from these solutions elements. So an ab initio fractal approach is pursued instead. In a number of subsections we successively expand a simple matrix model into a more sophisticated random matrix model that reproduces two main features of $Q$, the anomalous amplitude and period scaling. Starting from a deterministic signal propagation in \ref{ssNHME}, we introduce sign flips in \ref{ssFHME}, and shear gain and intermittence in \ref{ssSHME}. 

Section~\ref{conclusion} contains a short summary and outlook for future research. Appendix~\ref{appendixA} presents experimental results for the amplitude and period scaling of the integer Hofstadter sequence. Appendix~\ref{appendixB} supplements one of the modeling steps of Section~\ref{Hofstadter_Modeling}.
\section{Recurrence Relations for the Detrended Sequences} \label{detrended}
In order to subtract the implicit linear drift ($\sim n/2$) while maintaining strictly the property of being integer valued, we define detrended sequences 
\begin{align} 
a(n) &= 2 \, A(n)- n \, , \notag \\ 
d(n) &= 2 \, D(n)- n \, , \notag  \\
q(n) &= 2 \, Q(n)- n \, .
\label{def_adq}
\end{align}
The initial conditions of Eqs.~(\ref{eq:def_D}), (\ref{eq:def_A}), and (\ref{eq:def_Q}) translate into $a(1)=d(1)=q(1)=1$ and $a(2)=d(2)=q(2)=0$.

Figure~\ref{figADQ_small} shows the first 1024 values of $a(n)$, $d(n)$, and $q(n)$.
The detrended sequences reveal complex details (regular for $a(n)$, chaotic for $d(n)$ and $q(n)$). For the Conway type sequences $a$ and $d$ there is also a global structure of ups and downs that underlies the detailed fluctuations. This is what we will later model by a backbone function. The $q(n)$ fluctuations are not interrupted by smooth regions, but are also grouped in generations. Here we will aim at a better understanding of the scaling behavior of amplitude and generation length. 
\begin{figure}
\centering
\includegraphics[width=13cm]{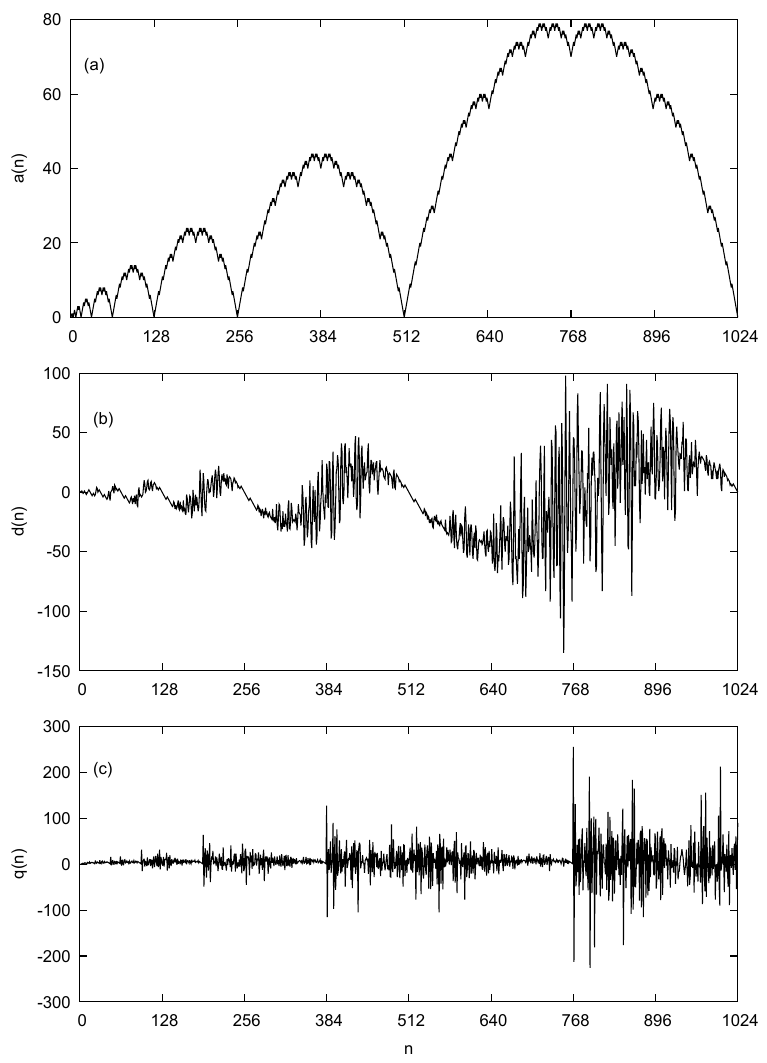}
\parbox[t]{0.85\textwidth}
 { \caption{\label{figADQ_small}
$a(n)$, $d(n)$, and $q(n)$ for $n \leq 1024$. 
 } }
\end{figure}
 By iterated use of Eqs.\ (\ref{eq:def_D},\ref{eq:def_A},\ref{eq:def_Q})  and  (\ref{def_adq}), one derives the following recurrence relations, for $n\geq3$: 
\begin{align}
\boldsymbol{a(n)} &= \boldsymbol{a(n_1) + a(n_2)}    \label{eq:def_a} \\
n_1 &= \tfrac{1}{2} \big[ n-1 + a(n-1) \big] \notag \\
n_2 &= \tfrac{1}{2} \big[ n+1 - a(n-1) \big] \notag \, , \\[15pt]
\boldsymbol{d(n)} &= \boldsymbol{d(n_1) + d(n_2) + \tfrac{1}{2} \big[ d(n-1) - d(n-2) - 1 \big]} \label{eq:def_d} \\
n_1 &= \tfrac{1}{2} \big[ n-1 + d(n-1) \big] \notag \\
n_2 &= \tfrac{1}{2} \big[ n - d(n-2) \big] \notag \, \, \\[15pt]
\boldsymbol{q(n)} &= \boldsymbol{q(n_1)+q(n_2) - \tfrac12 \big[ q(n-1) + q(n-2) -3 \big]} \label{qn_hofstadter} \\
n_1 &= \tfrac12 \left(   n+1 - q(n-1)      \right)  \notag \\
n_2 &= \tfrac12 \left( n+2 - q(n-2) \right) \notag \, .
\end{align}
Here we have introduced the notation $n_1=n_1(n)$ and $n_2=n_2(n)$ for the spots of the parents.
\section{Model Equation for the Conway Type Sequences} \label{CONWAY}
\subsection{$a(n)$ and $d(n)$ as Iterated Mappings}\label{iterated_mappings} 
The values of $a(n)$ and $d(n)$, together with the parent spots $n_i$ are listed in Table~\ref{tab:values}. The numbers are grouped in overlapping generation domains $I_K$, $K \geq -1$,
 defined through 
\begin{equation}
T_K = \left\{ n: 2^{K+1} \leq n \leq 2^{K+2} \right\} \, .
\end{equation}
\noindent
The initial data of the sequences reside in the domain $I_{-1}=\{1,2\}$. The data of Table~\ref{tab:values} reveal a few properties of $d(n)$ that are relevant for the remainder of the paper: The corresponding properties of $D(n)$ were already reported in \cite{mypaper}. 
\begin{conjecture}
$d(n)$ is exactly zero for n a power of 2, and the subsequent value is $-1$. 
\begin{align}
&d(2^k)     = 0  \; \textnormal{for} \; k \ge 1 \notag \, , \\  
&d(2^k+1) = -1 \;   \textnormal{for} \; k \ge 2 \, .
\end{align}
\end{conjecture} 
\begin{conjecture}
\noindent For $n \in I_K$, $K \geq 1$, the parent spots $n_1$ and $n_2$ of $d(n)$ are in $I_{K-1}$. 
\end{conjecture} 

\noindent 
Both conjectures have been numerically verified to high order of $k$ and $K$, respectively. For $a(n)$ similar statements hold, but as proven facts, not conjectures \cite{Kubo}. 

The conjectures (for $d$) or the proven facts (for $a$) allow to view the sequences as composed of successive mappings of integer valued functions defined 
on domains $P_{K-1}$ and $P_{K}$, defined by $P_K= \{ 0, 1, ... , 2^{K+2} \}$. We define 
\begin{align}
&\alpha_K(m)= a\left(  2^{K+1} + m  \right) \, , \notag \\
&\delta_K(m)= d\left(  2^{K+1} + m  \right) \, , \notag \\
&m = 0,1,...,2^{K+1} \, .
\end{align}
It is then easy to show that (assuming the validity of the conjectures for $d$) that for $m\geq 2$,
\begin{align}
\alpha_K(m) &= \alpha_{K-1}(m_1) + \alpha_{K-1}(m_2) \notag \\
m_1 &= \tfrac{1}{2} \big[ m-1 + \alpha_K(m-1) \big] \notag \\
m_2 &= \tfrac{1}{2} \big[ m - \alpha_K(m-2) \big]  \, , \\[15pt] 
\delta_K(m) &= \delta_{K-1}(m_1) + \delta_{K-1}(m_2) + \tfrac{1}{2} \big[ \delta_K(m-1) - \delta_K(m-2) - 1 \big] \notag \\
m_1 &= \tfrac{1}{2} \big[ m-1 + \delta_K(m-1) \big] \notag \\
m_2 &= \tfrac{1}{2} \big[ m - \delta_K(m-2) \big]  \, .   
\end{align}
with $\alpha_K(0)=\delta_K(0)=0$ and $\alpha_K(1)=\delta_K(1)=-1$. 

\begin{table*}[!ht]
\small
\centering
\begin{tabular}{rrr c rrr c rrr}
 & & & & \multicolumn{3}{c}{\textbf{sequence } $\mathbf{d(n)}$} & & \multicolumn{3}{c}{\textbf{sequence } $\mathbf{a(n)}$} \\
\cmidrule{5-7} \cmidrule{9-11}
$K$ & $m$ & $n$ & & $d(n)$ & $n_1$ & $n_2$ & & $a(n)$ & $n_1$ & $n_2$ \\
\midrule
$-1$ & 1 & 1 & & 1 & -- & -- & & 1 & -- & -- \\
     & 2 & 2 & & 0 & -- & -- & & 0 & -- & -- \\
\hline
0 & 0 & 2 & & 0 & 0 & 0 & & $\mathbf{0}$ & 0 & 0 \\
  & 1 & 3 & & 1 & 1 & 1 & & $\mathbf{1}$ & 1 & 2 \\
  & 2 & 4 & & 0 & 2 & 2 & & $\mathbf{0}$ & 2 & 2 \\
\hline
1 & 0 & 4 & & $\mathbf{0}$ & 2 & 2 & & 0 & 2 & 2 \\
  & 1 & 5 & & $\mathbf{-1}$ & 2 & 2 & & 1 & 2 & 3 \\
  & 2 & 6 & & $\mathbf{0}$ & 2 & 3 & & 2 & 3 & 3 \\
  & 3 & 7 & & $\mathbf{1}$ & 3 & 4 & & 1 & 4 & 3 \\
  & 4 & 8 & & $\mathbf{0}$ & 4 & 4 & & 0 & 4 & 4 \\
\hline
2 & 0 & 8  & & 0 & 4 & 4 & & 0 & 4 & 4 \\
  & 1 & 9  & & $-1$ & 4 & 4 & & 1 & 4 & 5 \\
  & 2 & 10 & & $-2$ & 4 & 5 & & 2 & 5 & 5 \\
  & 3 & 11 & & $-1$ & 4 & 6 & & 3 & 6 & 5 \\
  & 4 & 12 & & 0 & 5 & 7 & & 2 & 7 & 5 \\
  & 5 & 13 & & 1 & 6 & 7 & & 3 & 7 & 6 \\
  & 6 & 14 & & 2 & 7 & 7 & & 2 & 8 & 6 \\
  & 7 & 15 & & 1 & 8 & 7 & & 1 & 8 & 7 \\
  & 8 & 16 & & 0 & 8 & 7 & & 0 & 8 & 8 \\
\hline
3 & 0 & 16 & & 0 & 8 & 7 & & 0 & 8 & 8 \\
  & 1 & 17 & & $-1$ & 8 & 8 & & 1 & 8 & 9 \\
  & 2 & 18 & & $-2$ & 8 & 9 & & 2 & 9 & 9 \\
  & 3 & 19 & & $-3$ & 8 & 10 & & 3 & 10 & 9 \\
  & 4 & 20 & & $-2$ & 8 & 11 & & 4 & 11 & 9 \\
  & 5 & 21 & & $-1$ & 9 & 12 & & 3 & 12 & 9 \\
  & 6 & 22 & & $-2$ & 10 & 12 & & 4 & 12 & 10 \\
  & 7 & 23 & & $-3$ & 10 & 12 & & 5 & 13 & 10 \\
  & 8 & 24 & & $-2$ & 10 & 13 & & 4 & 14 & 10 \\
  & 9 & 25 & & 1 & 11 & 14 & & 5 & 14 & 11 \\
  & 10 & 26 & & 4 & 13 & 14 & & 4 & 15 & 11 \\
  & 11 & 27 & & 3 & 15 & 13 & & 3 & 15 & 12 \\
  & 12 & 28 & & 0 & 15 & 12 & & 4 & 15 & 13 \\
  & 13 & 29 & & 1 & 14 & 13 & & 3 & 16 & 13 \\
  & 14 & 30 & & 2 & 15 & 15 & & 2 & 16 & 14 \\
  & 15 & 31 & & 1 & 16 & 15 & & 1 & 16 & 15 \\
  & 16 & 32 & & 0 & 16 & 15 & & 0 & 16 & 16 \\
  \hline
\end{tabular}
\small 
\parbox[t]{0.85\textwidth}
{
\caption{Values of $d(n)$ and $a(n)$ together with the parent spots $n_1$ and $n_2$ for $n \le 32$.
The data are grouped according to the overlapping generation domains $I_K$. Note that boundary values like $n=4, 8, 16$ appear in two consecutive domains ($m$ denotes the offset index). The initial functions for $d$ (minus edgy sine) and $a$ (triangle) are bold faced.
\label{tab:values}
} 
}

\end{table*}

\medskip
\noindent
We have now established that the dynamics of $a$ and $d$ can be represented as a sequence of consecutive mappings of integer functions defined on the intervals $P_K$. A shortsighted look on the recurrences, ignoring, e.g., differences of $n$ and $n+1$, reveals that the function mapping is of the stylized type $g \rightarrow h$, with 
\begin{equation}
h(n) \sim g \left( \frac{n+h(n)}{2} \right) + 
               g \left( \frac{n-h(n)}{2} \right) + ... \, .
\label{discMaster}
\end{equation}
It seems plausible that this equation governs the global (large scale) dynamics of the sequences, and the studies performed in the next sections will confirm that this is indeed the case.

Let us conclude this subsection with the remark that the reduction of the dynamics into a sequence of mappings as described above is not absolutely necessary for the analysis to be presented below. Instead of studying Eq.~(\ref{discMaster}) one could work directly with 
\begin{equation}
h(n) \sim h \left( \frac{n+h(n)}{2} \right) + 
               h \left( \frac{n-h(n)}{2} \right) + ... \, .
\label{discMaster1}
\end{equation}
It is, however, easier and more transparent to work with Eq.~(\ref{discMaster}), and it reflects one of the features of the Conway type sequences that lead to the higher degree of order, compared, e.g., with the Hofstadter sequence.
\subsection{The Conway Model Equation} \label{master_equation} 
Based on the observations of the previous section, we consider the self-referential functional equation 
\begin{equation}
F(x) = E \left( \frac{x+F(x)}{2} \right) + 
          E \left( \frac{x-F(x)}{2} \right) \, , 
\label{eq:masterGL}
\end{equation}
where $E$ is a function on the real interval $[0,r]$, and $F$ is a function on $[0,2r]$, $r>0$. Before we look for explicit solutions on specific intervals, let us first state and prove a few observations on general properties of solutions $E,F$ of Eq.~(\ref{eq:masterGL}). We call Eq.~(\ref{eq:masterGL}) Conway Model Equation (CME), because it serves as a model for the Conway type sequences $a$ and $d$. 

Since the arguments of $E$ are required to stay in the interval $[0,r]$ we need 
\begin{align}
0 &\leq x + F(x) \leq 2r \notag \, \, \\
0 & \leq x - F(x) \leq 2r \, .
\label{bounds1}
\end{align}
This translates into the conditions 
\begin{equation}
\vert F(x) \vert  \leq
\begin{cases} 
x            &\text{for } 0 \leq x \leq r  \, , \\ 
2\, r - x  &\text{for } r < x < 2  r  \, . 
\end{cases}
\label{bounds2}
\end{equation}
Putting this into Eq.~(\ref{eq:masterGL}), we conclude that $E$ has to obey the analogous bounds 
\begin{equation}
\vert E(x) \vert  \leq
\begin{cases} 
x            &\text{for } 0 \leq x \leq r/2 \, , \\ 
 r - x  &\text{for } r/2 < x < r  \, . 
\end{cases}
\label{bounds3}
\end{equation}
As a direct consequence of Eqs.~(\ref{bounds2}) and (\ref{bounds3}) we have the boundary conditions 
\begin{align}
& E(0)= E(r) = 0  \, ,  \notag \\
& F(0)= F(2r) = 0 \, .
\label{bounds4}
\end{align}

\medskip
\begin{lemma}[Flip Lemma]
If two functions $E,F$ solve the CME, then also $-E,-F$ solve the CME. 
\end{lemma}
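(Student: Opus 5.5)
The plan is to substitute $\tilde E=-E$ and $\tilde F=-F$ into Eq.~(\ref{eq:masterGL}) and use the symmetry of the right-hand side under exchanging its two arguments. First I would check the domains. The bounds (\ref{bounds2}) and (\ref{bounds3}) involve only $\vert F\vert$ and $\vert E\vert$, so they are unchanged when $E,F$ are replaced by $-E,-F$. Equivalently, the admissibility conditions (\ref{bounds1}) are the pair $0\le x\pm F(x)\le 2r$, and this pair is invariant under $F\to -F$ because the two conditions simply swap. Hence the arguments $(x\pm\tilde F(x))/2$ still lie in $[0,r]$, where $\tilde E$ is defined.

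Next, for any $x\in[0,2r]$ I would compute the right-hand side for $\tilde E,\tilde F$:
\begin{equation*}
\tilde E\!\left(\frac{x+\tilde F(x)}{2}\right)+\tilde E\!\left(\frac{x-\tilde F(x)}{2}\right)
= -E\!\left(\frac{x-F(x)}{2}\right)-E\!\left(\frac{x+F(x)}{2}\right).
\end{equation*}
Since $E,F$ solve the CME, this equals $-F(x)=\tilde F(x)$. So $\tilde E,\tilde F$ solve the CME.

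The only subtlety is noticing that the sign flip in $F$ exchanges the two terms, and the sum is symmetric under that exchange. There is no real obstacle beyond checking the domain condition, which the first step handles.
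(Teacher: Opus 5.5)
Your proof is correct and is essentially the paper's argument: the paper likewise notes that changing the sign of $F$ swaps the two terms on the right-hand side of the CME, so the overall sign from $-E$ gives $-F$. Your explicit check that the conditions $0\le x\pm F(x)\le 2r$ are invariant under $F\to -F$ is a small addition that the paper leaves implicit.
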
 
\begin{proof}
The Flip Lemma follows from the observation that the two terms on the r.h.s.\ of the CME change their role when $F$ changes its sign. 
\end{proof}

\begin{lemma}[Shift Lemma]
If two functions $E,F$ solve the CME on the intervals $[0,r]$ and $[0,2 r]$, then the shifted functions
\begin{align}
\hat E: [r,2r] \rightarrow \mathbb{R} \, , \;  &\hat E(x)= E(x-r)  \, , \notag \\
\hat F: [2r,4r] \rightarrow \mathbb{R} \, , \; &\hat F(x)= F(x-2r)  \, ,
\end{align}
solve the CME on the intervals $[r,2r]$ and $[2r,4r]$. 
\end{lemma}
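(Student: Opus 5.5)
The plan is a direct substitution. We must show that for every $x \in [2r,4r]$
\begin{equation*}
\hat F(x) = \hat E\left( \frac{x+\hat F(x)}{2} \right) + \hat E\left( \frac{x-\hat F(x)}{2} \right) ,
\end{equation*}
where both arguments of $\hat E$ lie in its domain $[r,2r]$. First we fix $x \in [2r,4r]$ and set $y = x-2r \in [0,2r]$. By definition $\hat F(x) = F(y)$.

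Next we rewrite the arguments of $\hat E$. We have $\frac{x \pm \hat F(x)}{2} = \frac{y \pm F(y)}{2} + r$. Since $E,F$ solve the CME, the bounds (\ref{bounds1}) give $0 \le y \pm F(y) \le 2r$. Hence $\frac{y\pm F(y)}{2} \in [0,r]$, and so $\frac{x\pm \hat F(x)}{2} \in [r,2r]$, which is exactly the domain of $\hat E$.

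Finally we use the definition $\hat E(z) = E(z-r)$. This gives
\begin{equation*}
\hat E\left(\frac{x+\hat F(x)}{2}\right) + \hat E\left(\frac{x-\hat F(x)}{2}\right) = E\left(\frac{y+F(y)}{2}\right) + E\left(\frac{y-F(y)}{2}\right) = F(y) = \hat F(x),
\end{equation*}
where the middle equality is the CME for $E,F$ at the point $y$.

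The only point needing care is the domain check in the second step. Without it the shifted equation would not be well posed, and it is exactly where the factor $\tfrac12$ in the arguments turns the shift by $2r$ of $F$ into a shift by $r$ of $E$. Once this is checked, the rest is immediate.
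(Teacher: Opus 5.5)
Your proof is correct and is the same direct substitution the paper uses: write $\frac{x\pm\hat F(x)}{2}=\frac{(x-2r)\pm F(x-2r)}{2}+r$ and apply the CME for $E,F$ at the point $x-2r$. Your explicit check via (\ref{bounds1}) that these arguments land in $[r,2r]$, the domain of $\hat E$, is a small but welcome addition; the paper's one-line computation leaves it implicit.
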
 
\begin{proof}
The proof is done by direct computation: 
\begin{align}
\hat F(x) &= E \left(   \frac{x}2-r + \frac{F(x-2r}{2}  \right) + E \left(   \frac{x}2-r - \frac{F(x-2r}{2}  \right) \notag \\
              &=  \hat E \left(   \frac{x}2 + \frac{\hat F(x)}{2}  \right) + \hat E \left(   \frac{x}2 - \frac{\hat F(x)}{2}  \right) \, .
\end{align}
\end{proof}
\begin{lemma}[Composition Lemma]
Let the two pairs of functions $E_1,F_1$ and $E_2,F_2$ be solutions of the CME, with the $E$'s defined on $[0,r]$, and the $F$'s defined on $[0,2r]$. We  
define composed functions $E$ on the intervals $[0,2r]$ and $F$ on the interval $[0,4r]$ by 
\begin{equation}
E(x) =
\begin{cases} 
E_1(x)  &\textnormal{for } 0 < x \leq r  \, , \\
E_2(x-r)  &\textnormal{for } r < x < 2r  \, . 
\end{cases}
\end{equation}
and 
\begin{equation}
F(x) =
\begin{cases} 
F_1(x)  &\textnormal{for } 0 < x \leq 2r \, ,   \\
F_2(x-2r)  &\textnormal{for } 2r < x < 4r  \, .
\end{cases}
\end{equation}
Then the pair $E,F$ solves the CME, with $E$ defined on $[0,2r]$ and $F$ defined on $[0,4r]$. 
\end{lemma}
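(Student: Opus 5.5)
We check the CME pointwise on the two halves of $[0,4r]$, using the bounds and boundary conditions derived above together with the Shift Lemma.

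\emph{First half.} Take $x\in(0,2r]$, so $F(x)=F_1(x)$. Since $E_1,F_1$ solve the CME on $[0,r]$, $[0,2r]$, the bounds (\ref{bounds1}) hold for $F_1$. Hence both arguments $y_\pm=(x\pm F_1(x))/2$ lie in $[0,r]$. Wherever $y_\pm\in(0,r]$ we have $E(y_\pm)=E_1(y_\pm)$, so $F(x)=E(y_+)+E(y_-)$ follows directly from the CME for $E_1,F_1$.

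\emph{Second half.} Take $x\in(2r,4r)$, so $F(x)=F_2(x-2r)=\hat F_2(x)$ in the notation of the Shift Lemma. By the Shift Lemma, $\hat E_2(x)=E_2(x-r)$ and $\hat F_2$ solve the CME on $[r,2r]$ and $[2r,4r]$. The bounds (\ref{bounds1}) for $F_2$, translated by $2r$, place $y_\pm=(x\pm\hat F_2(x))/2$ in $[r,2r]$. On $(r,2r)$ we have $E=\hat E_2$, so the equation again transfers, provided $y_\pm$ is not an endpoint.

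\emph{Endpoints.} This is the main obstacle. The piecewise definition has a seam at $x=r$ (respectively $2r$) and leaves the ends $x=0,2r$ (respectively $0,4r$) unspecified. An argument $y_\pm$ may land exactly at $r$ in the second half, where $E(r)=E_1(r)$ rather than $E_2(0)$. It may also land at $0$ or $2r$, where $E$ is not assigned at all. The resolution is the boundary conditions (\ref{bounds4}): $E_1(0)=E_1(r)=E_2(0)=E_2(r)=0$ and $F_1(0)=F_1(2r)=F_2(0)=F_2(2r)=0$. So at the seam the two candidate values of $E$ agree (both vanish), and $F_1(2r)=0=F_2(0)$ as well. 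Hence the glued functions are well defined, and we may extend them by $E(0)=E(2r)=0$, $F(0)=F(4r)=0$. With this extension every boundary evaluation gives the value $0$ in both descriptions, and the CME for the glued pair reduces to the CME for the components. At the ends $x=0$ and $x=4r$ the equation reads $0=E(0)+E(0)$ and $0=E(2r)+E(2r)$, which holds.

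We conclude that $E,F$ solve the CME on $[0,2r]$ and $[0,4r]$.
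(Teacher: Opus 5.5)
Your proposal is correct and is essentially the paper's argument: the paper just notes that the right halves are shifted solutions (Shift Lemma) and that the two halves do not mix inside the CME, and you make this non-mixing explicit via Eq.~(\ref{bounds1}). You also handle the seam at $x=r$ and the ends $x=0$, $2r$, $4r$, which the piecewise definition leaves unassigned, using the fact that $E_i$ and $F_i$ vanish at their interval ends; this does hold, since evaluating the CME at $x=0$ and $x=2r$ gives it directly, and the paper leaves these details implicit.
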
 
\begin{proof}
The Composition Lemma is a corollary of the Shift Lemma. The right half parts of the functions are shifted versions of the original solutions. They do not mix when entering the CME. 
\end{proof} 
\subsection{Piecewise Linear Solutions of the CME} \label{piecewise_solution}
In this subsection, we shall look for iterated solutions of the CME in the space of piecewise linear functions on intervals $[0,2^{K+1}]$, with $K= 0,1,...$. For a motivation, look 
again at Table~\ref{tab:values}. The values of $d(n)$ in the interval $4...8$ assume $(0,-1,0,1,0)$, cf.\ the bold faced values. We want to use this ``edgy minus sine'' as the starting function for an iterated solution of the CME that will finally form the backbone of the sequence $d(n)$. Analogously, for $a(n)$, the triangle $(0,1,0)$ for $n=2,3,4$ will be the starting function for the $a(n)$ backbone. 

Let us define the positive triangle function $T(x)$ on $[0,2]$ by 
\begin{equation}
T(x) =
\begin{cases} 
x   &\text{for } 0 \leq x \leq 1 \, ,  \\
2-x &\text{for } 1 < x \leq 2 \, .
\end{cases}
\end{equation}
The ``edgy minus sine'' function with values $(0,-1,0,1,0)$ can then be represented by a composition of flipped $T$ and a $T$ shifted by 2. 
\begin{equation}
\text{EMS}(x) =
\begin{cases} 
-T(x)   &\text{for } 0 \leq x \leq 2 \,  ,  \\
 T(x)  &\text{for } 2 < x \leq 4 \, . \\ 
\end{cases}
\end{equation}
From the Flip and the Composition Lemmata of the previous section we know that we can compose solutions of the CME that use EMS$(x)$ as a starting function from the solutions that use $T(x)$ as a starting function. Consider a sequence of functions $F_K$ on intervals $I_K=[0,2^{K+1}]$, $K=0,1,...$. The functions are defined by their values 
$F_{K,n}$ on discrete nodes $X_{K,n}$ given by 
\begin{align}
X_{K,n} &= \sum_{m=0}^{n-1}  \binom{K+1}{m}  \,  \notag \, , \\
F_{K,n} &= F_K(X_{K,n}) =  \binom{K}{n-1} \, \notag \, , \\
n&= 0,...,K+2 \, . 
\label{binomis}
\end{align}
The values between these nodes are then defined by linear interpolation:
\begin{align}
F_K(x) &= F_{K,m(x)} + \frac{F_{K,m(x)+1} - F_{K,m(x)}   }  {   X_{K,m(x)+1} -X_{K,m(x)} } \, \left(  x - X_{K,m(x)} \right)  \notag \, , \\
m(x) &= \text{max} \left\{  l: X_{K,l} < x  \right \} \, .
\label{solcont}
\end{align}
For the following, it is important to note that 
\begin{equation}
\binom{N}{k}= 0 \; \text{if} \;\; k < 0 \;\; \text{or} \;\; k> N \, . 
\end{equation}
We thus can verify that 
\begin{equation}
X_{K,0} = 0 \, , \quad X_{K,K+2} = 2^{K+1} \, , \quad F_{K,0}= 0 \, , \quad F_{K,K+2}= 0 \, . 
\end{equation}
Furthermore, for $K=0$ we recover the discrete triangle function 
\begin{align}
X_{0,0} &= 0 \, , & X_{0,1}&= 1 \, ,   &X_{0,2}&= 2 \, , \notag \\
F_{0,0} &= 0 \, ,  & F_{0,1}& = 1 \, ,   &F_{0,2}&= 0 \, .
\end{align}
\begin{theorem}[Triangle Theorem] 
The discrete functions $F_K$ defined in Eq.~(\ref{binomis}), with $F_0$ representing the discrete triangle function on $[0,2]$, obey the CME Eq.~(\ref{eq:masterGL}): 
\begin{equation}
F_K(X_{K,n}) = 
F_{K-1}  \left( \frac{ X_{K,n} + F_K(X_{K,n})}2 \right) +
F_{K-1}\left( \frac{ X_{K,n} - F_K(X_{K,n})}2 \right) 
\end{equation}
The continuous solution is then given by Eq.~(\ref{solcont}).
\end{theorem}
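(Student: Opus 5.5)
The plan is to verify the identity node by node using Pascal's rule, and then extend it to the whole interval by linearity. The key observation is that the two arguments $\tfrac12(X_{K,n}\pm F_{K,n})$ land exactly on consecutive nodes of $F_{K-1}$.

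\emph{Step 1 (nodes).} Throughout, I use the convention $\binom{N}{k}=0$ for $k<0$ or $k>N$. Writing $\binom{K+1}{m}=\binom{K}{m}+\binom{K}{m-1}$ and summing over $m=0,\dots,n-1$ gives
\begin{equation*}
X_{K,n}=\sum_{m=0}^{n-1}\binom{K}{m}+\sum_{m=0}^{n-2}\binom{K}{m}=X_{K-1,n}+X_{K-1,n-1}.
\end{equation*}
Directly from the definition, $X_{K-1,n}-X_{K-1,n-1}=\binom{K}{n-1}=F_{K,n}$. Adding and subtracting these two relations yields
\begin{equation*}
\frac{X_{K,n}+F_{K,n}}{2}=X_{K-1,n}, \qquad \frac{X_{K,n}-F_{K,n}}{2}=X_{K-1,n-1}.
\end{equation*}
Hence the right-hand side of the CME at $X_{K,n}$ equals $F_{K-1,n}+F_{K-1,n-1}=\binom{K-1}{n-1}+\binom{K-1}{n-2}=\binom{K}{n-1}=F_{K,n}$, again by Pascal's rule.

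\emph{Step 2 (boundary indices).} The index $n$ runs up to $K+2$, while $F_{K-1}$ only has nodes up to $K+1$. I handle this with the same zero convention: for $n\ge K+1$ one has $X_{K-1,n}=2^{K}$ and $F_{K-1,n}=0$. So the formulas remain valid and all arguments stay in $[0,2^K]$. The cases $n=0$ and $n=K+2$ then reduce to $0=0$, which is consistent with the boundary conditions (\ref{bounds4}).

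\emph{Step 3 (interpolated values).} Now take $x\in[X_{K,n},X_{K,n+1}]$. On this segment $F_K$ is affine, so both maps $x\mapsto \tfrac12(x\pm F_K(x))$ are affine. By Step 1 they send the endpoints to the consecutive node pairs $(X_{K-1,n},X_{K-1,n+1})$ and $(X_{K-1,n-1},X_{K-1,n})$, respectively. An affine map carries the segment onto the segment between its endpoint images, and each image segment lies within a single linear piece of $F_{K-1}$. Therefore the right-hand side of the CME is an affine function of $x$ on $[X_{K,n},X_{K,n+1}]$. It agrees with the affine function $F_K$ at both endpoints, so the two coincide on the whole segment. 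This shows that the interpolation (\ref{solcont}) solves the CME everywhere on $[0,2^{K+1}]$.

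\emph{Main obstacle.} I expect the only delicate point to be this last step, namely checking that the image segments never straddle a node of $F_{K-1}$. The exact node correspondence of Step 1 settles this. Degenerate segments, where consecutive nodes coincide because a binomial coefficient vanishes, occur only at the extremes, and Step 2 covers them.
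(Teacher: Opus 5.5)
Your proof is correct. On the nodes it is the paper's argument, namely Pascal's rule as in Lemmas~\ref{lemma_FF} and~\ref{lemma_xpmf}. You merely organize it through $X_{K,n}=X_{K-1,n}+X_{K-1,n-1}$ and $X_{K-1,n}-X_{K-1,n-1}=F_{K,n}$.

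You take a genuinely different route for the passage from nodes to the whole interval. The paper differentiates the CME to get $F'=\bigl(E'(u)+E'(v)\bigr)/\bigl(2-E'(u)+E'(v)\bigr)$ with $u,v=\tfrac12\bigl(x\pm F(x)\bigr)$. It then argues by induction on $K$ that, because $E=F_{K-1}$ has piecewise constant slope, the slope of $F_K$ ``stays constant until the next node is reached.'' That argument has two weaknesses:
\begin{itemize}
\item It presupposes differentiability, which fails exactly at the kinks where (\ref{DGL_nodes}) is evaluated; only one-sided derivatives make sense there.
\item It tacitly assumes that, while $x$ runs through $[X_{K,n},X_{K,n+1}]$, neither $u$ nor $v$ crosses a node of $F_{K-1}$. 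That is precisely what your endpoint-mapping argument proves.
\end{itemize}

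Your affine-composition argument needs no calculus and no induction, since both $F_K$ and $F_{K-1}$ are piecewise linear by definition. It also treats the degenerate end segments and the domain condition $u,v\in[0,2^K]$ explicitly. As a verification that the interpolant solves the CME, it is the cleaner and more complete proof.

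What the paper's route offers in exchange is an explicit recursion linking the slopes on level $K$ to those on level $K-1$. It also hints that the linear continuation is \emph{forced}, which a pure verification does not address.

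If you want to justify the definite article in ``the continuous solution,'' one line suffices. Fix $x$ and consider
\[
g(y)=F_{K-1}\bigl(\tfrac{x+y}{2}\bigr)+F_{K-1}\bigl(\tfrac{x-y}{2}\bigr)-y .
\]
Since $|F_{K-1}'|\le 1$, $g$ is non-increasing. Its slope vanishes only where $\tfrac{x+y}{2}\in[0,1]$ and $\tfrac{x-y}{2}\in[2^K-1,2^K]$, and on that region $g\equiv 2^K\neq 0$. Hence $g$ cannot vanish on a whole interval, and $F_K(x)$ is its unique root for every $x$.
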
 
\begin{proof}[Proof of Triangle Theorem]
First note that it suffices to consider the non-boundary indices $0 < n < K+1$, since the CME is automatically fulfilled on the boundary values, where the functions vanish. The proof is done with the help of the two Lemmata~\ref{lemma_FF} and \ref{lemma_xpmf}.
\begin{lemma} 
For $K \geq 1$ the function values $F_{K,n}$ on the nodes $X_{K,n}$ obey 
\begin{equation}
F_{K,n} = F_{K-1,n} + F_{K-1,n-1} \, . 
\end{equation}
\label{lemma_FF}
\end{lemma}
\begin{proof}[Proof of Lemma \ref{lemma_FF}] 
The Lemma is a direct consequence of the triangle property of the binomial coefficients, 
\begin{equation}
F_{K,n} = \binom{K}{n-1}= \binom{K-1}{n-1} + \binom{K-1}{n-2} = F_{K-1,n} + F_{K-1,n-1} \, .
\end{equation}
\end{proof}

\begin{lemma} 
For $K \geq 1$ the nodes $X_{K,n}$ and the function values $F_{K,n}$ obey 
\begin{equation}
X_{K,n} \pm F_{K,n} = 
\begin{cases}
2 \, X_{K-1,n}    & \textnormal{if ``+''}    \, , \\
2 \, X_{K-1,n-1} & \textnormal{if  ``--''}   \, .
\end{cases}
\label{xpmf}
\end{equation}
\label{lemma_xpmf}
\end{lemma} 
\begin{proof} 
The proof of Lemma \ref{lemma_xpmf} relies on the binomial identity used already for Lemma \ref{lemma_FF}:
\begin{align}
X_{K,n} \pm F_{K,n}&= \sum_{m=0}^{n-1} \binom{K+1}{m}   \pm \binom{K}{n-1}    \notag \\
&=\sum_{m=0}^{n-1} \left\{  \binom{K}{m} + \binom{K}{m-1}  \right\}    \pm \binom{K}{n-1}      \notag \\
&=\sum_{m=0}^{n-1}   \binom{K}{m} + 
     \sum_{m=0}^{n-2}   \binom{K}{m}
  \pm \binom{K}{n-1}   \, .   
\end{align}
In the case of the plus-sign, the stand-alone binomial is used to complete the second sum to yield $X_{K-1,n}$. In the case of the minus-sign the extra binomial is used to be subtracted from the first sum, yielding $X_{K-1,n-1}$. This completes the proof of the Lemma.
\end{proof}

Combining the two Lemmata we have proved the first part of the Triangle Theorem concerning the solution on the discrete nodes. We still have to show that for $x$-values between the nodes the solution is given by linear interpolation, i.e.\ Eq.~(\ref{solcont}). This can be done by first deriving a differential equation for $E$ and $F$ of the CME. Applying the chain rule to Eq.~(\ref{eq:masterGL}), we obtain 
\begin{equation}
F'(x) = E' \left( \frac{x+F(x)}{2}  \right) \, \frac{1+F'(x)}2 + E' \left( \frac{x-F(x)}{2}  \right) \, \frac{1-F'(x)}2 \, .
\end{equation}
Solving for $F'$ yields 
\begin{equation}
 F'(x) = \frac{  E' \left( \frac{x+F(x)}{2}  \right) + E' \left( \frac{x-F(x)}{2}  \right)  }
 {  2 - E' \left( \frac{x+F(x)}{2}  \right) + E' \left( \frac{x-F(x)}{2}  \right)  } \, .
 \label{DGL}
\end{equation}
Plugging in the result Eq.~(\ref{xpmf}) we find that at the nodes of the discrete solution of the CME 
\begin{equation}
 F'_K(X_{K,n}) = \frac{  F'_{K-1} \left( X_{K-1,n} \right) + F'_{K-1} \left( X_{K-1,n-1}  \right)  }
 {  2 - F'_{K-1} \left( X_{K-1,n} \right) + F'_{K-1} \left( X_{K-1,n-1}  \right)  } \, .
 \label{DGL_nodes}
\end{equation}
We can now set up an inductive argument: For $K=0$, the continuous function $F_0(x)$ is given (by definition) by the triangle function $T(x)$ that is piecewise linear interpolating the function values (0,1,0) for the $x$-values on the domain $[0,2]$ (start of induction). Let us now assume that the continuous solution of the CME is given by Eq.~(\ref{solcont}) on level $K-1$.  From Eqs.~(\ref{DGL}) and (\ref{DGL_nodes}) we can conclude that the linear interpolation on level $K-1$ between nodes $X_{K-1,n-1}$ and $X_{K-1,n}$ is directly transfered to the same property on level $K$ for the range of $x$-values between $X_{K,n}$ and $X_{K,n+1}$: The r.h.s.\ of Eqs.~(\ref{DGL}) (the slope of the linear piece) stays constant until the next node is reached. We have thus completed the proof of the Triangle theorem with the help of the differential equations that links the (piecewise constant) slopes of the solution on level $K-1$ to those of level $K$.
\end{proof} 

In Figure~\ref{figSols} we show the solutions $F_K(x)$ for $K \leq 6$. They are growing arcs with alternating styles of the top (edgy, flat). 
\begin{figure}
\centering
\includegraphics[width=13cm]{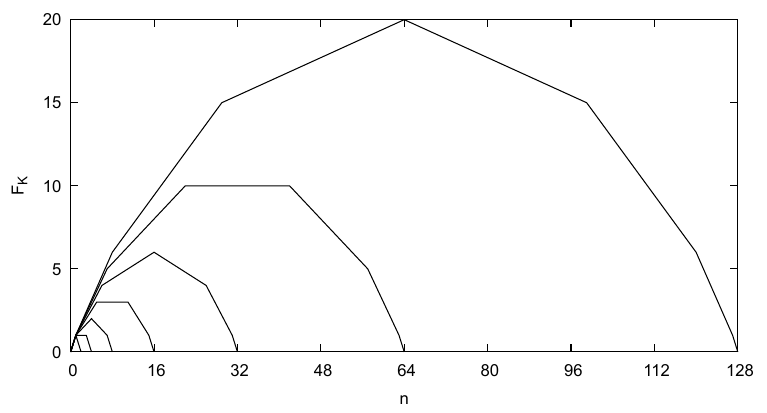}
\parbox[t]{0.85\textwidth}
 {
 \caption{\label{figSols}
Solution of the CME with triangle start function for $K=0,...,6$. 
 }
 }
\end{figure}

We conclude this section with a study of the large $K$-behavior of the $F_{K,n}$ and $X_{K,n}$. It is well known that the binomial coefficients converge towards a normal distribution, and, correspondingly, the $X_{K,n}$ will become distributed according to a cumulative normal distribution (error function). We introduce a rescaled continuous variable $\xi$ that measures the relative distance from the center of the generation domains, 
\begin{equation} 
n \approx \frac{K}{2} + \xi \frac{\sqrt{K}}{2} \, . 
\end{equation} 
The solution for large $K$ is then given by 
\begin{align} 
X(\xi) &\sim 2^K \left[ 1 + \text{erf}(\sqrt{2}\xi) \right] \  \notag \, , \\
F(\xi) &\sim \frac{2^K}{\sqrt{\pi K/2}} e^{-2\xi^2} \, .
\label{approx_large_K}
\end{align}
Fig.\ \ref{figSol_Large_K} shows a comparison of the exact solutions with the large-$K$-approximations for $K=10$ and $K=11$. 
\begin{figure}
\centering
\includegraphics[width=13cm]{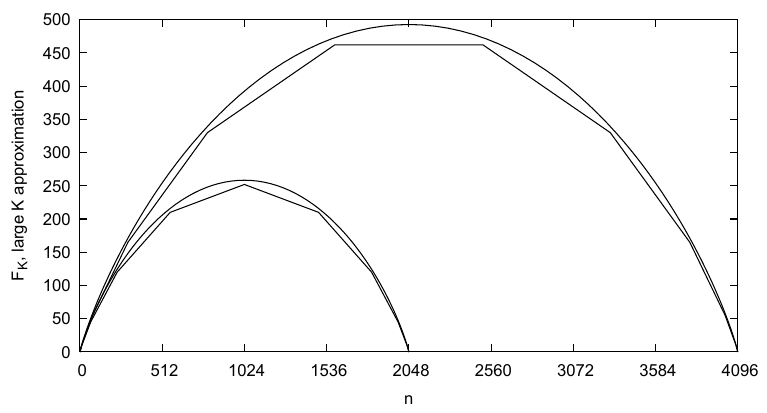}
\parbox[t]{0.85\textwidth}
 {
 \caption{\label{figSol_Large_K}
The solution of the CME with triangle start function, for $K=10$ and $K=11$, together with their large-$K$-approximations according to Eq.~(\ref{approx_large_K}). 
 }
 }
\end{figure}
The large-$K$-approximation shows that the amplitude of oscillations of $d(n)$ and $a(n)$ grow like $2^K/\sqrt{K}$.
\subsection{Building the Backbones}\label{backbone}
Using the methods and results of the previous section, we can now construct the backbones of $d(n)$ and $a(n)$.  To this end, we employ the Flip and Shift Lemmata (for $d$)
or only the shift Lemma (for $a$) to compose the backbone function on the whole positive integer domain. We have a parametric representation with two parameters $K$ (generation) and $n$ (label of integer sites where the slope of the solutions changes): 

The buildup of the backbone $\mathcal{B}_a$ for $a(n)$ can be formally described as follows:
 \begin{equation}
 \mathcal{B}_a = \bigcup_{K=0}^{\infty} \left\{ (x + 2^{K+1}, F_K(x)) \mid x \in \{X_{K,0}, \dots, X_{K,K+1}\} \right\}
 \end{equation} 
The backbone $\mathcal{B}_d$ of the detrended sequence  $d(n)$ is defined as the union of the flipped and shifted solutions $F_K$ as follows. 
 \begin{equation} 
 \mathcal{B}_d = \bigcup_{K=0}^{\infty} \left( \mathcal{S}_{K}^{-} \cup \mathcal{S}_{K}^{+} \right)
 \end{equation}
where we have the partial sets for the negative and the positive phase of each generation $K$:
 \begin{align}
\mathcal{S}_{K}^{-} &= \left\{ \left( X_{K,n} + 2^{K+2}, \,-F_{K,n} \right) \mid n = 0, \dots, K+1 \right\} \notag \, , \\
\mathcal{S}_{K}^{+} &= \left\{ \left( X_{K,n} + 2^{K+2} + 2^{K+1}, \,F_{K,n} \right) \mid n = 0, \dots, K+1 \right\} \, .
\end{align}
The definitions look complicated, they however describe nothing else than shifting of the $x$-values and flipping of half of the $F$-values. 
This is essentially the inverse of the mapping done in section~\ref{iterated_mappings} when defining $\alpha$ and $\delta$ as shifted versions of $a$ and $d$.
\begin{figure}
\centering
\includegraphics[width=13cm]{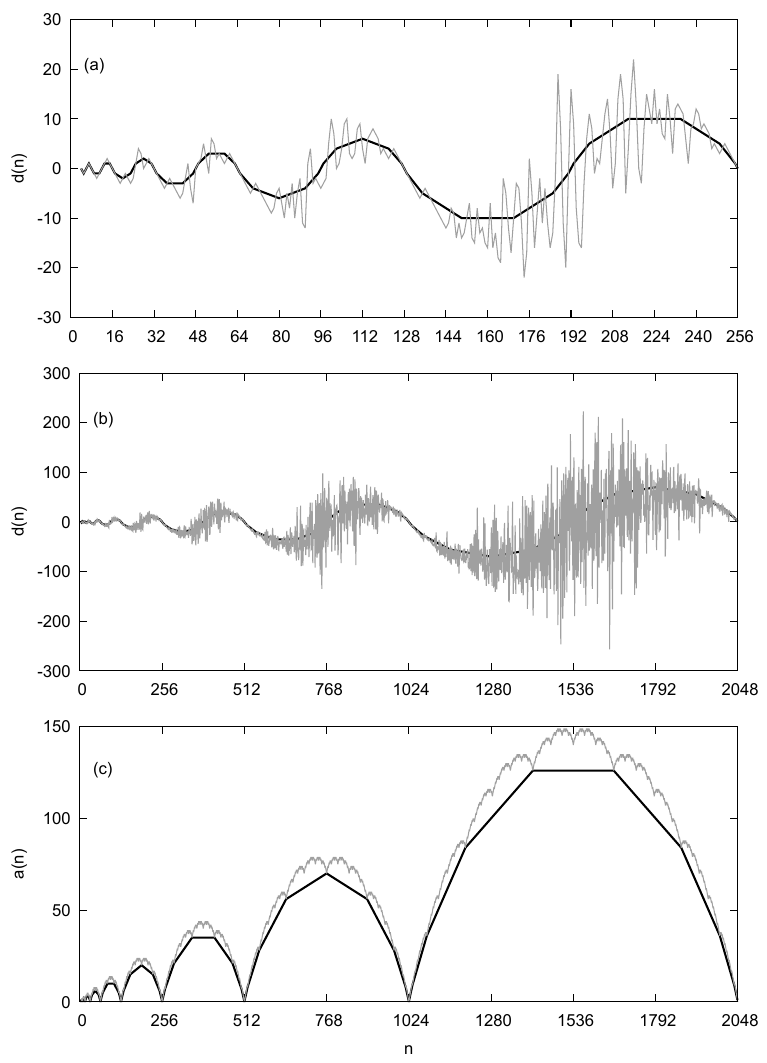}

\parbox[t]{0.85\textwidth}
 {
 \caption{\label{fig_backbone}
$d(n)$ and $a(n)$, $d(n)$, together with their piecewise linear backbone functions $n \leq 2048$. The first figure shows the $d$-backbone in some more detail for $n\leq 256$.
 }
 }
\end{figure}
Fig.\ \ref{fig_backbone} shows the backbones, together with the sequence data, for $n \leq 2048$, and also in some more detail ($n\leq 256$) for $d(n)$. In the case of $a(n)$, the backbone exactly supports the downward edges of the sequence graph, thus perfectly describing the global structure, with the exception of the fractal seam that is completely built on top of the backbone. 

For $d(n)$ we see a smooth fit through the oscillating chaos. It seems that the left (negative) part in each generation has more downside weight in the sequence data, thus reflecting a certain asymmetry that is not present in the backbone. It might be interesting to study the origin of this phenomenon in future research. 
\section{Modeling the Hofstadter Sequence}\label{Hofstadter_Modeling}
In the following, we shall derive and study a stochastic continuous model equation that exhibits many of the stylized facts of the integer $q(n)$-sequence. We start with a very simple model that sets the frame and starting point. The first step is to look for continuous solutions. 
\subsection{Non-Stochastic (Naive) Model Equation} \label{ssCONT}
The functional recurrence relation 
\begin{equation}
F(x) = 2 \, F\left( \frac{x}{2} -\frac{F(x)}{2} \right) \, , 
\label{HME_naive} 
\end{equation} 
to be called NHME (Naive Hofstadter Model Equation), is a very simple minded attempt to encode the essential dynamics of the detrended Hofstadter sequence defined in 
Eq.~(\ref{qn_hofstadter}). We have neglected the ``local'' terms and the constant on the r.h.s., and merged the two terms $q(n_1)$ and $q(n_2)$ that refer to two different parental sites into a single term with a factor 2 in front. Both assumptions are, of course, questionable. A statistical analysis shows that the absolute parental spot distance $\vert n_1 - n_2 \vert$ is a random variable with relevant fluctuations. Nevertheless, Eq.~(\ref{HME_naive}) is a good starting point for introducing the matrix approach pursued in the following. The first observation is that the NHME admits continuous piecewise linear solutions. Consider two straight lines 
\begin{align}
y(x) &= a \, x + b \, , \notag \\
\hat y(x)&= \hat a \, x   + \hat b  \, .
\end{align} 
When the relations 
\begin{align}
\hat a &=  a / (1+a) \, , \notag \\
\hat b &=  2 \, b / (1+a) \, , 
\end{align}
are obeyed, a piece of solution of Eq.~(\ref{HME_naive}) can be built by  
\begin{equation}
 F(x) =
\begin{cases} 
y(x)  & \text{for } \; x_1 \leq x \leq  x_2  \\ 
\hat y(x) &\text{for }  \, \hat x_1 \leq x \leq \hat x_2   \, . 
\end{cases}
\label{compo}
\end{equation}
After a little algebra we find that given $(x_1,x_2)$, the corresponding values $( \hat x_1, \hat x_2)$ are 
\begin{align}
\hat{x}_1 &= 2(x_1 + y_1) \notag \, , \\
\hat{x}_2 &= 2(x_2 + y_2) \, . 
\label{xhat}
\end{align}
Starting from these relations, one can set up a factory of various types of sawtooth type solutions (steep ascent, instant drop), always with the aim to recover the stylized facts of the original Hofstadter sequence, but to no avail. The main problem with piecewise linear functions in this context is the strong coupling of the $x-$ and $y-$components in Eq.~(\ref{xhat}) that dramatically flatten out steep ascending parts of the solution and thus making the observed wildly oscillating behavior of the Hofstadter sequence impossible. One is therefore led to the concept of an ab initio fractal, with no continuity assumption at all. 
\subsection{Non-Stochastic (Naive) Model Equation} \label{ssNHME}
Sticking still to the NHME, we consider two impulses $(x,y)$ and $(x',y')$ that represent a function value $y$ at position $x$, and a function value $y'$ at position $x' > x$. They are related by the linear mapping 
\begin{equation} 
\begin{pmatrix} x' \\ y'  \end{pmatrix} = 
\begin{pmatrix}
  2 & 2 \\
  0 & 2
\end{pmatrix}
\begin{pmatrix} x \\ y  \end{pmatrix} = \mathbf{P} \begin{pmatrix} x \\ y  \end{pmatrix} \, .
\end{equation} 
Starting from a dense set of arbitrary impulses on a starting interval, say $I=[1,2]$, one can then by iterated application of the matrix $\mathbf{P}$ 
generate a diluted fractal for $x>2$ that obeys the NHME. Figure~\ref{fig_mm_triangle} (a) shows such a fractal for an initial distribution of impulses with a triangle shaped envelope. 
The strong deficit is the lack of symmetry due to the shear term (upper right entry of the matrix) that produces a significant tilt in the data. 
The strong tilt and lack of symmetry is also very apparent in Figure~\ref{fig_mm_vertical_line} (a). It shows the fifth generation of a starting distribution that consists of a point cloud sitting in a single vertical line positioned at $x=2$. The symmetry deficit will be overcome by introducing a stochastic flip in the model, cf.\ the next subsection.
\begin{figure}
\centering
\includegraphics[width=13cm]{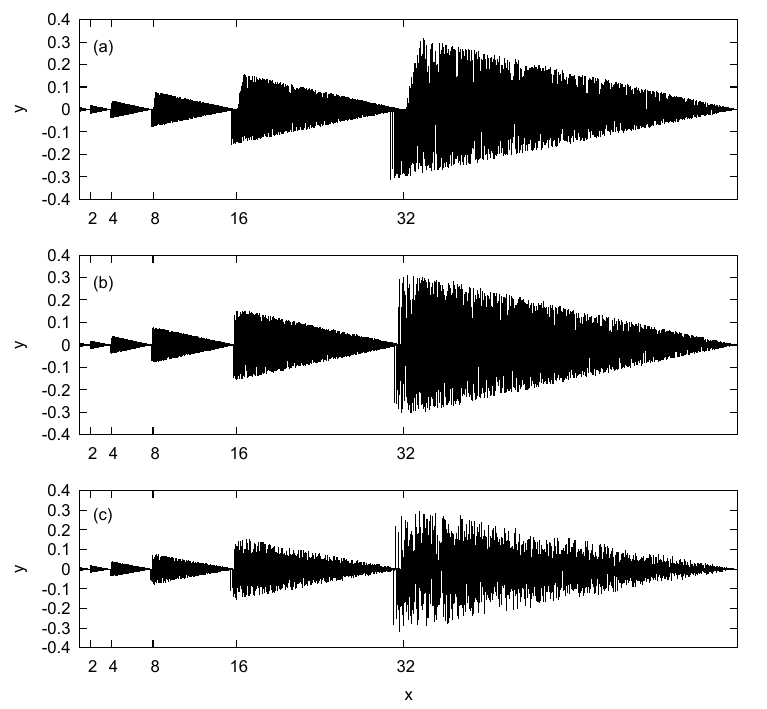}
\parbox[t]{0.85\textwidth} {
 \caption{\label{fig_mm_triangle}
The fractals generated by the matrix models NHME (a), FHME (b), and SHME (c), resulting from a triangle shaped start fractal on the interval  $1 < x < 2$. The triangle cloud has a height of 0.01 at $x=1$ and vanishing height at $x=2$.
 } }
\end{figure}

\begin{figure}
\centering
\includegraphics[width=13cm]{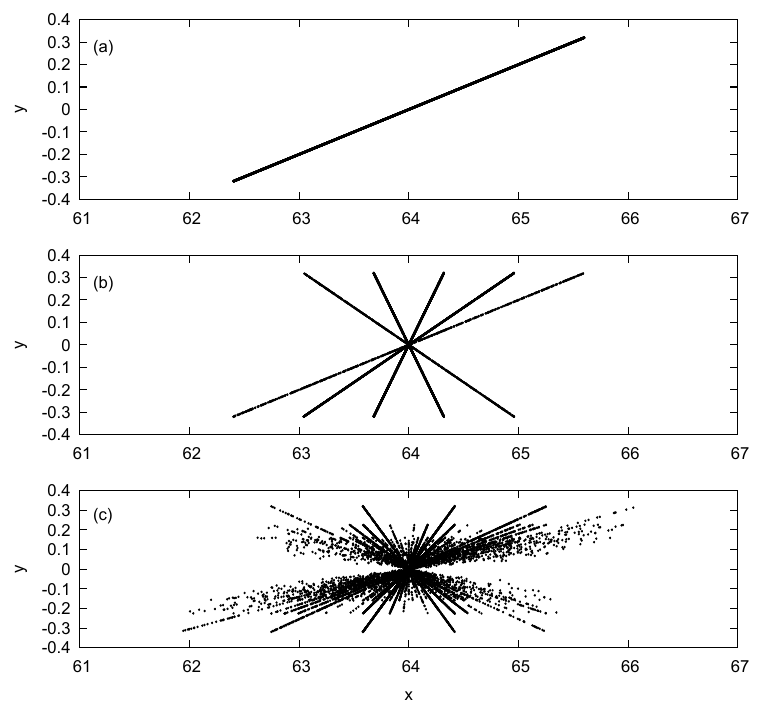}
\parbox[t]{0.85\textwidth} {
 \caption{\label{fig_mm_vertical_line}
 Image of a starting fractal at $x_0=2$ that consists of random points on a vertical line from $y_0=-0.01$ to $y_0=0.01$, under iterated application of the three matrix models NHME (a), FHME (b), and SHME (c). 
 } }
\end{figure}
\subsection{The Flip HME}\label{ssFHME}
To overcome the symmetry problem of the NHME we introduce a modified transfer matrix $\mathbf{M}$ that includes a flip operation (change of sign) for the function values. This is inspired by the idea that peeking into a wildly oscillating fractal you just do not get to know the sign of the signal observed. So we randomize it deliberately. The flip matrix is 
\begin{equation} 
\mathbf{M} = 
\begin{pmatrix}
  2 & -2 \\
  0 & -2
\end{pmatrix} \, .
\end{equation} 
When building the Hofstadter type fractal, we now apply randomly $\mathbf{P}$ or $\mathbf{M}$: 
\begin{equation} 
\begin{pmatrix} x' \\ y'  \end{pmatrix}_k = 
\begin{pmatrix}
  2 &  2 \, S_k \\
  0 & 2 \, S_k 
\end{pmatrix}
\begin{pmatrix} x \\ y  \end{pmatrix}_{k-1} = \mathbf{R}  \begin{pmatrix} x \\ y  \end{pmatrix}_{k-1}      \, ,
\end{equation} 
where $S_k$ is an i.i.d.\ random number selected with uniform probability from the two signs $\{-1,1 \}$. The matrix $\mathbf{R}$ corresponds to a Flip Hofstadter Model Equation
(FHME) 
\begin{equation}
F(x)= 2 S(x) \, F  \left(  \frac{x}2 - \frac{F(x)}{2}   \right)   \, ,
\label{eqFHME}
\end{equation}
where $S(x)$ denotes the fluctuating sign function. The derivation is as follows: We identify $y'=F(x')$, $y=F(x)$. From $x'=2x+2S(x') \, y$ and $y'=2S(x') \, y$ we get that $x=(x'-y')/2$, since the sign factor cancels out. After relabelling $x' \rightarrow x$ we arrive at Eq.~(\ref{eqFHME}).

A fractal generated with the FHME model ist shown in Figure~\ref{fig_mm_triangle} (b). We clearly see a restoration of symmetry compared to the NHME. One also observes a growing population of impulses somewhat to the left of the ``natural'' generation boundaries at powers of 2. This is a remnant of the shear effect that is so pronounced in the NHME and lies at the heart of the emergence of an anomalous period scaling that is an intriguing property of the integer Hofstadter sequence, cf.\ Appendix~\ref{appendixA}. 
In Figure~\ref{fig_mm_vertical_line} (b) we show again the effect on a vertical starting line after five applications of the matrix. 

With the FHME we now have a model that is based on random walks running from the left to the right with starting points in the initial fractal that in our examples sit in 
the interval $[1,2]$. 
The FHME still has a relevant weakness: The solutions grow too fast. The Hofstadter sequence shows an amplitude growth $\sim  2^{\alpha k}$, with $\alpha \approx 0.884$. The FHME, however, scales up with an exponent 1, i.e.\ like $2^k$. This will be overcome with the modifications done in the next section. 
It is possible to work out some interesting analytical results for the FHME. We shall, however, postpone the exercise since there are two more extensions of the model to be introduced. 
\subsection{Adding Shear Gain and Intermittency to the HME}\label{ssSHME}
In this subsection, we shall introduce two extensions of the FHME. The first step is to seek some remedy for the neglect of the non back-looking terms of Eq.~(\ref{qn_hofstadter}). We define  
\begin{equation}
z(n)=q(n)+\tfrac12 [q(n-1)+q(n-2)-3] \, . 
\end{equation}
The idea is to replace this sum by an effective term $q(n)/\gamma$ that refers to a single spatial site only. This is some kind of mean field approximation, but most likely more realistic than to neglect the neighbor terms altogether. Due to some anticorrelation effects, $\gamma$ turns out to be larger than 1. In Appendix~\ref{appendixB} we explain the method for the estimation of $\gamma$ and find the result $\gamma \approx 1.31$. In the context of the HME, a $\gamma > 1$ is built into the matrix approach as a shear gain factor as follows. The matrix 
$\mathbf{R}$ is modified to 
\begin{equation}
\mathbf{T} = 
\begin{pmatrix}
  2 & 2\, S_k \, \gamma \\
  0 & 2\, S_k 
\end{pmatrix} \, .
\end{equation}
So the amplification factor occurs in the right upper corner of the random matrix and thus amplifies the shear effects of the matrix. As a second extension, let us consider an improvement upon the replacement of the two back-looking terms of the r.h.s.\ of the Hofstadter equation by a single term with a factor 2 in front. There is no obvious way to include the two terms in a continuous equation approach. We also want to stick to a single term model in order to preserve the forward propagation framework pursued so far. So we try to catch two extremal situations by introducing another random variable $G$ that models in a crude way the interplay of the two r.h.s.\ Hofstadter terms. With probability $p$, $G$ takes the value 2 (full correlation), and with probability $1-p$ the factor is $\sqrt{2}$ (statistical independence). It implements a kind of intermittency, a stochastic change of strong and weaker amplitude growth. The corresponding matrix model is 
\begin{equation}
\mathbf{U} = 
\begin{pmatrix} 
2 & 2 \, S_k \gamma \\ 
0 & S \, G_k 
\end{pmatrix} \, ,
\label{mm_g}
\end{equation}
and the functional recurrence equation becomes
\begin{equation}
F(x)= S(x) \, G(x)  \, F  \left(   \frac{x}2 - \frac{\gamma} {G(x)} F(x)\right)   \, .
\end{equation}
The derivation of the functional equation from the matrix is analogous to the one for the FHME. We denote this extended model by SHME (Statistical HME). 

In the next two subsections, we shall collect some analytical results for the dynamics generated by the matrix model defined in Eq.~(\ref{mm_g}). 
\subsubsection{Anomalous Amplitude Growth and Scaling Exponent $\alpha$}\label{ssAmplitude}
The state vector changes according to
\begin{equation} 
\begin{pmatrix} x \\ y  \end{pmatrix}_k = 
\begin{pmatrix}
  2 & 2 S_k \, \gamma \\
  0 & S_k \, G_k 
\end{pmatrix}
\begin{pmatrix} x \\ y  \end{pmatrix}_{k-1}  \, .
\end{equation} 
The amplitude $y$ develops independently of $x$: $y_k = S_k \, G_k \, y_{k-1}$. Since we are only interested in the thickness of the fractal, the sign can be ignored, and 
\begin{equation}
 |y_k| = |y_0| \prod_{j=1}^k G_j \, .
\end{equation}
In order to find the asymptotic growth, we consider the expectation value of the logarithm:
\begin{equation}
\langle \ln |y_k| \rangle = \ln |y_0| + \sum_{j=1}^k \langle \ln G_j \rangle \approx k \langle \ln G \rangle \, .
\end{equation}
The definition of the exponent $\alpha$ (cf.\ Appendix~\ref{appendixA}) is through $|y_k| \sim 2^{\alpha k}$. It follows that $\alpha = \langle \log_2 G \rangle$. With the probabilities $p$ for $G=2$ and $1-p$ for $G=\sqrt{2}$ we get $\alpha = (1+p)/2$. 

In order to obtain the experimental value of $\alpha \approx 0.884$, one needs a relatively high probability $p \approx 0.768$. Figure~\ref{fig_mm_triangle} (c) shows the result
of a random walk starting with the triangle cloud using $\gamma=1.31$ and $p=0.768$. The fluctuations look considerably thinner than those generated with the FHME. 
A numerical analysis, computing the evolution of the standard deviation over the subsequent generations confirms indeed the scaling behavior with the right exponent $\alpha$. 
The corresponding observation can also be made on Figure~\ref{fig_mm_vertical_line} (c) that shows the random walk result of the vertical line at $x_0=2$.
\subsubsection{Anomalous Period and Exponent $\eta$}\label{ssPeriod}
In order to quantify the anomalous period phenomenon that is reproduced by our matrix model, we consider the position $x_k$. The recursion relation is 
\begin{equation} 
x_k = 2 x_{k-1} + 2 S_k \gamma y_{k-1} \, . 
\end{equation} 
By induction we obtain 
\begin{equation}
x_k = 2^k x_0 + \sum_{j=1}^k 2^{k-j} (2 S_j \gamma) y_{j-1} \, .
\label{front1}
\end{equation}
The period reduction comes from random walk elements with minimal forward propagation. These are those where the shear term is negative in all steps. This means that the flip $S_j$ must have the opposite sign to the sign of the amplitude $y_{j-1}$:
\begin{equation}
S_j \cdot \text{sgn}(y_{j-1}) = -1 \quad \implies \quad S_j y_{j-1} = -|y_{j-1}|  \, . 
\end{equation} 
We now exploit that the envelope of the back-falling front is also characterized by extremal $y$-growth, i.e., paths that are mostly in the $(G=2)$ state. Thus 
$|y_{j-1}| \approx  2^{j-1} |y_0| $. Putting this into Eq.~(\ref{front1}), we obtain for the front position  
\begin{equation}
x_k^{\text{front}} \approx 2^k x_0 + \sum_{j=1}^k 2^{k-j} (2 \gamma) \left( -2^{j-1} |y_0| \right) \, . 
\label{front2}
\end{equation}
This simplifies to 
\begin{equation} 
x_k^{\text{front}} \approx 2^k x_0 - \gamma |y_0| \sum_{j=1}^k 2^k = 2^k x_0 - k \gamma |y_0| 2^k = 2^k x_0 \left( 1 - k \gamma \frac{|y_0|}{x_0} \right) \, .
\end{equation} 
For small changes the linear term can be considered as the first order expansion term of the exponential $1 - \epsilon k \approx (1-\epsilon)^k$: 
\begin{equation} 
x_k^{\text{front}} \approx x_0 \left[ 2 \left( 1 - \gamma \frac{|y_0|}{x_0} \right) \right]^k \, .
\end{equation} 
From this we can directly read off that 
\begin{equation}
\eta = 2 \gamma \frac{|y_0|}{x_0} \, . 
\end{equation}
If we look at a whole cloud of starting points, this generalizes to 
\begin{equation}
\mu_{\text{max}} = \max_{(x_0, y_0) \in \text{initial fractal}} \frac{|y_0|}{x_0} \, ,
\end{equation}
and 
\begin{equation}
\eta_{\text{eff}} = 2 \gamma \cdot \mu_{\text{max}} \, . 
\end{equation}
For the triangle shaped starting fractal of Figure~\ref{fig_mm_triangle}, we have $\mu_{\text{max}}=0.01$, and therefore $\eta=0.0262$. The naive boundaries at 2, 4, 8, 16, 32,... should thus be shifted to 1.97, 3.90, 7.69, 15.18, 29.96, and this is fulfilled to good precision by the data. 
\section{Summary and Outlook} \label{conclusion}
In this paper, continuous self-referential functional equations were used as a utility to better understand the behavior of three meta-Fibonacci sequences. In the case of the Conway type sequences $a(n)$ and $d(n)$, this allowed us to find smooth backbones that describe the global behavior very well. The functional equations proposed for the Hofstadter sequence $q(n)$ forced us into the exploration of fractal type solutions and thereby helped to better understand the anomalous scaling properties. It seems worthwhile to further pursue this kind of approach when attempting to study meta-Fibonacci sequences as beautiful examples of systems that show scaling, self-similarity and many more fascinating properties still to be discovered. 
\section*{Acknowledgements}
Gemini 3 Pro was employed as a helpful research assistant.

\appendix 
\section{Appendix: Hofstadter Scaling Exponents} \label{appendixA}
\subsection{Anomalous Period Scaling Exponent $2-\eta$}
Inspecting Eq.~(\ref{qn_hofstadter}), one would expect that the Hofstadter sequence has a spatial scaling length of 2. This would mean that the length of generations grows like $2^k$, with $k$ the generation number. It turns out, however, that the meta-Fibonacci dynamics leads to an anomalous scaling factor of $2-\eta$, with some small positive $\eta$. In this subsection, we determine $\eta$ and provide evidence for the corresponding scaling behavior. 

The generation structure of the Hofstadter sequence was first investigated in \cite{mypaper2}.  In \cite{dalton}, Dalton, Rahman, and Tanny propose an algorithm to determine generation boundaries in a natural way. They define an auxiliary sequence $M(n)$ with the help of the ``mother spot'' $n_1$,
\begin{align}
n_1 &= n-Q(n-1) \notag \, , \\ 
M(n)&= M(n_1)+1 \, , 
\end{align}
with $M(1)=M(2)=1$. 
The left boundary $b(k)$ of a generation $k$ is the smallest index $n$ where $M(n)$ reaches the value $k$. 
The reasoning behind this definition is just that the generation of the child is nothing else than the generation number of the mother plus 1. Table~\ref{tab:dalton_boundaries} shows the $b(k)$ for $3 \leq k \leq 18$, together with the naive boundaries at $3\cdot 2^{k-2}$. Starting from $k=11$, the $b(k)$ are smaller than $3\cdot 2^{k-2}$. 
\begin{table}[htbp]
\centering
\begin{tabular}{rrr| rrr}
\hline\hline
$k$ & $b(k)$ & $3\cdot 2^{k-2}$ & $k$ & $b(k)$ & $3\cdot 2^{k-2}$ \\ 
\hline
           3    &         6  &     6  &  11 &     1522 &      1536 \\
           4    &       12  &   12  &  12 &     3031 &      3072 \\
           5    &       24  &   24  &  13 &     6043 &      6144 \\
           6    &       48  &   48  &  14 &   12056 &    12288 \\
           7    &       96  &   96  &  15 &   24086 &    24576 \\
           8    &     192  & 192  &  16 &   48043 &    49152 \\
           9    &     384  & 384  &  17 &   95286 &    98304 \\
          10    &    768  & 768  &  18 & 189268 &  196608 \\
\hline\hline
\end{tabular}
\parbox[t]{0.85\textwidth}
{
\caption{The generation boundaries of the Hofstadter sequence according to \cite{dalton}, together with the naive boundaries $3\cdot 2^{k-2}$.}
\label{tab:dalton_boundaries}
}
\end{table}
If there is anomalous scaling we expect that 
\begin{equation}
b(k)-b(k-1) \approx 3 \, (2-\eta)^{(k-3)} \, .
\end{equation}
Solving this equation for $\eta$, we are led to define 
\begin{equation}
\eta_k = 2 - \left(   \frac{b(k)-b(k-1)}{3} \right)   ^{1/(k-3)}    \, .
\end{equation}
Table~\ref{tab:eta_values} provides the numerical results for $\eta_k$, $k=11,...,26$. There is a very clear convergence starting from generation 17, suggesting that $\eta\approx 0.006$ is a good estimate. 
\begin{table}[htbp]
\centering
\begin{tabular}{cc | cc | cc | cc}
\hline\hline
$k$ & $\eta_k$ & $k$ & $\eta_k$ & $k$ & $\eta_k$ & $k$ & $\eta_k$ \\
\hline
11 & 0.0046 & 15 & 0.0035 & 19 & 0.0058 & 23 & 0.0059 \\
12 & 0.0039 & 16 & 0.0039 & 20 & 0.0061 & 24 & 0.0064 \\
13 & 0.0039 & 17 & 0.0057 & 21 & 0.0057 & 25 & 0.0061 \\
14 & 0.0039 & 18 & 0.0060 & 22 & 0.0064 & 26 & 0.0060 \\
\hline\hline
\end{tabular}
\parbox[t]{0.85\textwidth}
{
\caption{Convergence of the $\eta_k$, extracted from the interval lengths of the generations $k=11$ to $26$. }
\label{tab:eta_values}
}
\end{table}
Figure~\ref{fig_tanny_generations} shows $q(n)/n^{\alpha}$ on a logarithmic $n$-axis together with markers for the naive generation boundaries (bottom), spaced by intervals of length $3\cdot 2^k$. The Dalton et al.\ boundaries that scale with factors $3 \cdot (2-\eta)^k$, $\eta \approx 0.006$ are shown as markers in the upper part of the figure. One clearly sees that the naive generation lengths are too large, and the corresponding markers enter into the inner generation domain when the generation number grows. 
\begin{figure}
\centering
\includegraphics[width=13cm]{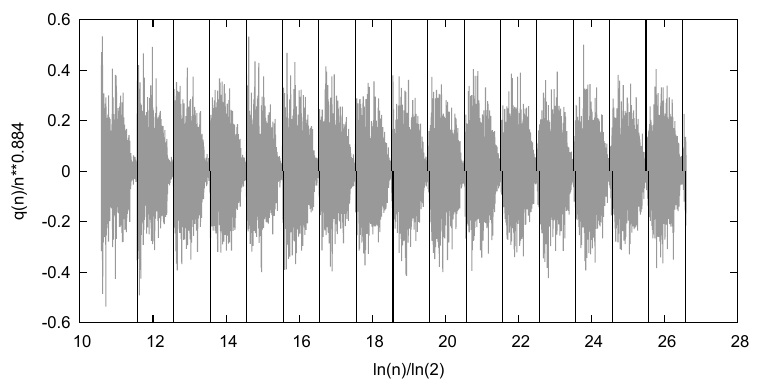}
\parbox[t]{0.85\textwidth} {
 \caption{\label{fig_tanny_generations}
Scaled $q(n)/n^{\alpha}$ on a logarithmic $n$-axis together with markers for the naive generation boundaries (bottom), spaced by intervals of length $3\cdot 2^k$. The  boundaries that scale with factors $3 \cdot (2-\eta)^k$, $\eta \approx 0.006$ are indicated by markers in the upper part of the figure.
 } }
\end{figure}
\subsection{Amplitude Scaling Exponent $\alpha$}
In the previous subsection we have already used the amplitude rescaling exponent $\alpha \approx 0.884$ that tells us that the fluctuations of $q(n)$ do not grow  $\sim 2^k$ with increasing generation index, but instead $\sim 2^{\alpha k}$. Here we demonstrate the validity of concept and value by computing the standard deviation of the fluctuations 
over generations $k$ defined by the boundaries $b(k)$ of the previous subsection. With 
\begin{equation}
\rho_{k,i} = \frac{1}{b(k+1)-b(k)} \sum_{n=b(k)}^{b(k+1)-1} q(n)^i \, ,
\end{equation}
we define
\begin{equation}
\sigma_k = \sqrt{\rho_{k,2} - \rho_{k,1}^2} \, . 
\end{equation}
Table~\ref{tab:sigma_values} shows the results for the rescaled $\sigma_k / 2^{\alpha k} $, with $\alpha=0.884$. There is a good stabilization of the rescaled standard deviations, thus establishing the scaling behavior and the estimate for $\alpha$. 
\begin{table}[htbp]
\centering
\begin{tabular}{cc | cc | cc | cc}
\hline\hline
$k$ & $\sigma_k$ & $k$ & $\sigma_k$ & $k$ & $\sigma_k$ & $k$ & $\sigma_k$ \\
\hline
10 & 0.1089  & 14 & 0.0987 & 18 &  0.0984  & 22 &  0.0988   \\
11 & 0.1001  & 15 & 0.0985 & 19 &  0.0987  & 23 &  0.0990   \\
12 & 0.0995  & 16 & 0.0987 & 20 &  0.0986  & 24 &  0.0990   \\
13 & 0.0990  & 17 & 0.0987 & 21 &  0.0990  & 25 &  0.0991   \\
\hline\hline
\end{tabular}
\parbox[t]{0.85\textwidth}
{
\caption{Convergence of standard deviation $\sigma_k / 2^{\alpha k}$ over generations $k$ for $k=10$ to $25$, with $\alpha= 0.884$. }
\label{tab:sigma_values}
}
\end{table}
\section{Appendix: Shear Factor for the Statistical HME } \label{appendixB}
In this appendix, we consider the quantity $z(n)=q(n)+\tfrac12 [q(n-1)+q(n-2)-3]$ which constitutes the l.h.s.\ of the recurrence for $q(n)$, when the non-backward-looking parts  are sorted to the left, cf.\ Eq.~(\ref{qn_hofstadter}). A statistical analysis shows that the neighboring $q(n)$ have a tendency to be anticorrelated. The idea therefore is to replace $z(n)$ by 
$q(n)/\gamma$ in the statistical context of the Statistical Hofstadter Model Equations. An appropriate value of $\gamma$ is selected by a variance matching. As in Appendix~\ref{appendixA} we computed the standard deviation over the Dalton et al.\ generations \cite{dalton} for $k$ up to 26. It turns out that with $\gamma=1.31$ there is a fine matching of the standard deviation measured over the generations, cf.\ Table~\ref{tab:gamma_matching}. 
\begin{table}[htbp]
\centering
\begin{tabular}{c | cc }
\hline\hline
$k$  &    $\sigma_k(q)$ & $\sigma_k(\gamma z)$  \\
\hline
  20 &   0.01975  &  0.01978 \\
  21 &   0.01829  &  0.01832 \\
  22 &   0.01684  &  0.01689 \\
  23 &   0.01558  &  0.01563 \\
  24 &   0.01437  &  0.01441 \\
  25 &   0.01327  &  0.01331 \\
\hline\hline
\end{tabular}
\vspace{1ex} 
\parbox{0.85\textwidth}{
\caption{Matching of the standard deviations per generation for the quantities $q(n)$ and $z(n)$, with $\gamma=1.31$.}
\label{tab:gamma_matching}
}
\end{table}

\begin{thebibliography}{999}
\bibitem{mypaper}
K. Pinn, {\em A Chaotic Cousin of Conway's Recursive Sequence}, J. Exp. Math. 9 (2000) 55.
%
\bibitem{OEIS}
{\em The Online Encyclopedia of Integer Sequences}, oeis.org/A055748.
%
\bibitem{Mallows}
C. L. Mallows, {\em Conway's challenge sequence},
Amer.\ Math.\ Monthly 98 (1991) 5.
%
\bibitem{Kubo}
T. Kubo and R. Vakil, {\em On Conway's recursive sequence},
Discrete Math.\ 152 (1996) 225. 
%
\bibitem{GEB} D. R. Hofstadter, {\em G\"odel, Escher, Bach: an Eternal Golden Braid}, Basic Books, New York, 1979. 
%
\bibitem{Guy} R. K. Guy, {\em Problem E31}, in: Unsolved Problems in 
Number Theory, Third Edition, Springer 2004.
%
\bibitem{Conolly} {\em Fibonacci and Meta-Fibonacci sequences}, in: 
S. Vajda, ed., Fibonacci \& Lucas Numbers and the Golden Section: 
Theory and Application (E. Horwood Ltd.\ 1989) 127. 
%
\bibitem{Tanny}
S. M. Tanny, {\em A well-behaved cousin of the Hofstadter sequence},
Discrete Math.\ 105 (1992) 227. 
%
\bibitem{dalton}
B. Dalton, M. Rahman, S.M. Tanny, {\em Spot-based generations for meta-Fibonacci sequences},
J.\ Exp.\ Math.\ 20 (2011) 129.
%
\bibitem{Fox}
N. Fox, {\em A slow relative of Hofstadter's Q-sequence}, Journal of Integer Sequences 20 (2017) 3.
%
\bibitem{mypaper2}
K. Pinn, {\em Order and Chaos in Hofstadter's Q(n) Sequence},  Complexity , Vol. 4, No. 3 (1999) 41.
%
\end{thebibliography}
\end{document}